\newcommand{\bigo}{\mathcal{O}}
\DeclareMathOperator{\hull}{CH}
\DeclareMathOperator{\cost}{cost}
\newcommand{\poly}{\text{poly}}
\newcommand{\GETSP}[1][]{(\ifthenelse{\equal{#1}{}}{V}{#1},T)\mbox{-}\textrm{GETSP}}
\newcommand{\GETSPH}[1][]{(\ifthenelse{\equal{#1}{}}{V}{#1},T,H)\mbox{-}\textrm{GETSP}}
\newcommand{\ETSP}{k\mbox{-}\textrm{ETSP}}
\title{Euclidean TSP with few inner points\\ in linear space}
\author{Pawe\l{} Gawrychowski\inst{1} and Damian Rusak\inst{2}}
\authorrunning{}
\institute{
  Max-Planck-Institut f\"{u}r Informatik, Saarbr\"ucken, Germany
\and
  Institute of Computer Science, University of Wroc{\l}aw, Poland
}
\begin{document}

\maketitle

\begin{abstract}
Given a set of $n$ points in the Euclidean plane, such that just $k$ points are strictly inside the convex hull of the whole set, we
want to find the shortest tour visiting every point. The fastest known algorithm for the version when $k$ is significantly
smaller than $n$, i.e., when there are just few inner points, works in $\bigo(k^{11\sqrt{k}}k^{1.5}n^{3})$ time~[Knauer and Spillner, WG 2006],
but also requires space of order $k^{\Theta(\sqrt{k})}n^{2}$. The best linear space algorithm takes $\bigo(k!kn)$ time~[Deineko, Hoffmann, Okamoto, 
Woeginer, Oper. Res. Lett. 34(1), 106-110]. We construct a linear space $\bigo(nk^2+k^{\bigo(\sqrt{k})})$ time algorithm. The new insight is
extending the  known divide-and-conquer method based on planar separators with a matching-based argument to shrink the instance in every
recursive call. This argument also shows that the problem admits a quadratic bikernel.
\end{abstract}

\section{Introduction}
The traveling salesman problem is one of the most natural optimization questions. Already proven to be NP-hard in the classical book by Garey and Johnson, it remains to be NP-hard even in the most natural Euclidean version~\cite{npcomplete}. A simple $\bigo(2^{n}n^{2})$ dynamic programming can be used to solve the general version, where $n$ is the number of points, but one can do much better by exploiting the additional properties of the Euclidean variant. This was independently observed by Smith~\cite{Smith}, Kann~\cite{kann1992approximability}, and Hwang, Chang, and Lee~\cite{separators}, who all applied a similar reasoning, which we will call the strategy of searching over separators, to achieve an $\bigo(n^{\bigo(\sqrt{n})})$ running time. 
Even though the problem is NP-hard, we might try to construct an algorithm whose running time depends exponentially only on some parameter $k$ of the input instead of the whole $n$. We say that a problem is \emph{fixed-parameter tractable}, if it is possible to achieve a running time of the form $\bigo(f(k)n^{c})$, where $k$ is the parameter.
A closely connected notion is the one of admitting a \emph{bikernel}, which means that we can reduce in polynomial time any its instance to an instance of a different problem, whose size is
bounded by a function of $k$.\footnote{This notion is usually used for decision problem, while we will be working with an optimization question, but this is just a technicality.}
In case of the Euclidean traveling salesman problem, a natural parameterization is to choose $k$ to be the number of inner points, where a point is inner if it lies strictly inside the convex hull of the input. A result of Deineko, Hoffman, Okamoto and Woeginger~\cite{parametrized} is that in such setting $\bigo(2^{k}k^{2}n)$ time is possible (see their paper for an explanation why such parameterization is natural). This was subsequently improved to $\bigo(k^{11\sqrt{k}}k^{1.5}n^{3})$ by Knauer and Spillner~\cite{fasterparametrized}.\footnote{The authors state the result for minimum weight triangulation, but the companion technical report shows that the same strategy works for our problem.} The space consumption of their method (and the previous method) is superpolynomial, as they apply a dynamic programming on $k^{\Theta(\sqrt{k})}n^{2}$ states.

\paragraph{Contribution.}
Our goal is to construct an efficient linear space algorithm. As the previously mentioned exact algorithm for the non-parametrized version~\cite{separators} requires polynomial space, a natural approach is to apply the same strategy. In our case we want the total running time to depend mostly on $k$, though, so we devise a technique of reducing the size of current instance by applying a matching-based argument, which allows us to show that the problem admits a bikernel of quadratic size. By applying the same strategy
of searching over separators on the bikernel, we achieve $\bigo(nk^2+k^{\bigo(k)})$ running time. To improve on that, we extend the strategy by using weighted planar separators, which give us a better handle on how the number of inner points decreases in the recursive calls. The final result is an $\bigo(nk^{2}+k^{\bigo(\sqrt{k})})$ time linear space algorithm.

\paragraph{Overview.}
As in the previous papers, we start with the simple observation that the optimal traveling salesman tour visits the points on the convex hull in the cyclic order. In other words, we can treat subsequent points on the convex hull as the start and end points of subpaths of the whole tour that go only through the inner points. Obviously, no more than $k$ of such potential subpaths include any inner points. We call them \emph{important} and show that we can quickly (in polynomial time) reduce the number of pairs of subsequent points from the convex hull that can create such important subpath to $k^{2}$, and for the remaining pairs we can fix the corresponding edge of the convex hull to be a part of the optimal tour, which shows that the problem admits a bikernel of quadratic size. The reduction shown
in Section~\ref{sec:reduction} is based on a simple (weighted) matching-based argument and works in $\bigo(nk^2 + k^6)$ time and linear space.
The second step is to generalize the \emph{Generalized Euclidean Traveling Salesman Problem}~\cite{separators} as to use the properties of the convex hull more effectively.
In Section~\ref{sec:searching} we modify the strategy of searching over separators, so that its running time depends mostly on the number of inner points. More specifically, we use
the weighted planar separator theorem of Miller~\cite{miller} to prove that there exists a separator whose size is proportional to the 
square root of the number of inner points, irrespectively of the number of outer points. Now if the number of outer points is polynomial, which can
be ensured by extending the aforementioned matching-based reduction, we can iterate over all such separators. Having the separator,
we guess how the solution intersects with it, and recurses on the two smaller subproblems. The separator is chosen so that the number
of inner points decreases by a constant factor in each subproblem, so then assuming the reduction is performed
in every recursive call, we obtain $\bigo(k^{\bigo(\sqrt{k})})$ running time in linear space.

\paragraph{Assumptions.}
We work in the Real RAM model, which ignores the issue of being able to compute distances only up to some accuracy.
By $d(p,q)$ we denote the Euclidean distance between $p$ and $q$. In the rest of the paper, by planar graph we actually mean its fixed
straight-line embedding, as the nodes will be always known points in the plane. Whenever we are talking about sets of points,
we want distinct points, which can be ensured by perturbing them.

\section{The reduction}\label{sec:reduction}

We want to construct an efficient algorithm for a variant of the \emph{Euclidean Traveling Salesman Problem}, called $\ETSP$, in which we are given a set $V$ of $n$ points such that 
exactly $k$ of them lie strictly inside $\hull(V)$, which is the convex hull of the whole set. The algorithm first reduces the 
problem in $\bigo(nk^{2}+k^{6})$ time to an instance of \emph{Generalized Euclidean Traveling Salesman Problem} of size at most $\bigo(k^2)$, and then 
solves the instance in $\bigo(k^{\bigo(\sqrt{k})})$ time. By the size we mean the value of $n+2m$, where $n$ and $m$ are 
defined as below.

\begin{framed}
\noindent \emph{\textbf{Generalized Euclidean Traveling Salesman Problem}} $\GETSP$
\\[5pt]
\noindent Given a set $V = \left\{v_{1},\dots,v_{n}\right\}$ of inner points and a set $T = \left\{(t_{1}, t'_{1}),\dots,(t_{m},t'_{m})\right\}$ of terminal pairs of points, find a set of $m$ paths with the smallest total length such that:
\begin{enumerate}
\item the $i$-th path is built on $(t_{i},t'_{i})$, i.e., it starts from $t_{i}$ and returns to $t'_{i}$,
\item every $v_{i}$ is included in exactly one of these paths,
\end{enumerate}
assuming that in any optimal solution the paths have no self-intersections, and no path intersects other path, except possibly at the ends.
\end{framed}

It is well-known that in an optimal solution to an instance of $\ETSP$ the outer points are visited in order in which they appear on $\hull(V)$
(otherwise the solution intersects itself and can be shortened). Hence we can reduce $\ETSP$ to $\GETSP[V']$ by
setting $V'=V\setminus\hull(V)$ and $T=\left\{ (x_{1},x_{2}),\dots,(x_{n-k},x_{1})\right\}$, where
$\hull(V)=\langle x_{1}, \dots, x_{n-k} \rangle$. As any optimal solution to $\ETSP$ has no self-intersections, the paths in any optimal 
solution to the resulting instance have no self-intersections and do not intersect each other, except possibly at the ends.

We will show that given any instance of $\GETSP$, we can quickly reduce the number of terminal pairs to $\bigo(n^{2})$. A path in a solution to such 
instance is \emph{important} if it includes at least one point from $V$, and \emph{redundant} otherwise. Obviously, a redundant path consists of just one 
edge $(t_{i},t'_{i})$ for some $i$, and the number of important paths in any solution is at most $n$. What is maybe less obvious, we can efficiently 
determine a set of at most $n^{2}$ terminal pairs such that the paths built on the other terminal pairs are all redundant in some optimal solution. To prove 
this, we will notice that every solution to an instance of $\GETSP$ corresponds to a matching, and apply a simple combinatorial lemma. The idea is 
that every important path $\left\langle u_{0},u_{1},\ldots,u_{\ell},u_{\ell+1}\right\rangle$ consists of the middle part $\left\langle u_{1},\ldots,u_{\ell}
\right\rangle$ containing only inner points, and the endpoints $u_{0}=t$, $u_{\ell+1}=t'$ for some terminal pair $(t,t')$. We create a weighted complete 
bipartite graph, where every possible pair of inner points $(u,u')$ corresponds to a left vertex, and every terminal pair corresponds to a right vertex. The 
weight of an edge between $(u,u')$ with $(t,t')$ is $d(t,u)+d(u',t')-d(t,t')$, see Fig.~\ref{fig:paths}(c).

\begin{figure}[t]
\centering
\def \svgwidth{0.8\columnwidth}
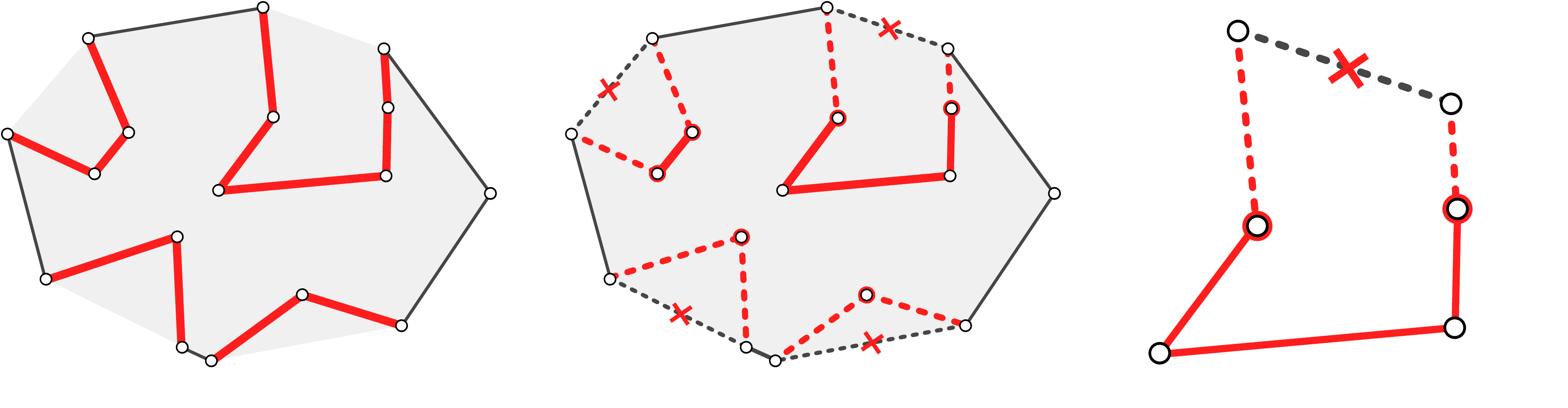
\caption{(a) A solution with important paths marked with thick lines, (b) connecting the inner parts of important paths to form a solution, (c) connecting a single inner part $\langle u, \dots, u' \rangle$ to a terminal pair $(t,t')$ costs $d(t,u)+d(u',t')-d(t,t')$.}
\label{fig:paths}
\end{figure}
 
First we present a simple combinatorial lemma. Given a weighted complete bipartite graph $G=(U\cup V,U\times V,c)$, where $c(u,v)$ is the weight of an edge $(u,v)$,
$\cost(X,Y)$ denotes the weight of a cheapest matching of $X\subseteq U$ to $Y\subseteq V$, if $|X|\leq |Y|$. If $M$ is a matching of $X$ to $Y$, then we denote by $M[X]$ and $M[Y]$ the subsets of $X$ and $Y$ matched by $M$.

\begin{lemma}\label{matching}
Let $G = (U \cup V, U\times V)$ be a weighted complete bipartite graph, where $|U| \leq |V|$.
If $M_{\min}$ is a cheapest matching of $U$ to $V$, then for every $A \subseteq U$ we have $\cost(A,M_{\min}[V])=\cost(A,V)$.
\end{lemma}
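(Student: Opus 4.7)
My plan is to prove the two inequalities separately. The direction $\cost(A, M_{\min}[V]) \geq \cost(A, V)$ is immediate since $M_{\min}[V] \subseteq V$. For the other direction, I will take a cheapest matching $M^{*}$ of $A$ into $V$ and construct a matching $M'$ of $A$ into $M_{\min}[V]$ with $\cost(M') \leq \cost(M^{*})$ by an augmenting-path exchange inside the symmetric difference $S := M_{\min} \triangle M^{*}$.

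Every vertex of $S$ has degree at most two, so the components of $S$ are simple paths and even alternating cycles. Path endpoints are precisely the vertices matched in exactly one of $M_{\min}, M^{*}$; since every $a \in A$ is matched in both (so contributes degree $0$ or $2$ to $S$), the $U$-side endpoints lie in $U\setminus A$ while the $V$-side endpoints lie in $V_{1} := M_{\min}[V]\setminus M^{*}[V]$ or $V_{2} := M^{*}[V]\setminus M_{\min}[V]$. A short parity argument combining edge alternation with bipartiteness rules out all other combinations, leaving only two path types: type~(a), from $U\setminus A$ to $V_{1}$ with both extreme edges in $M_{\min}$, and type~(c), from $V_{1}$ to $V_{2}$ with one extreme edge in each matching (and hence equally many edges of each). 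The crucial consequence is that vertices of $V_{2}$ can appear only as endpoints of type~(c) paths.

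I now define $M'$ by starting from $M^{*}$ and, on each type~(c) path, swapping its $M^{*}$-edges for its $M_{\min}$-edges; the swaps occur on pairwise disjoint components, so $M'$ is a valid matching of $A$. Its image lies in $M_{\min}[V]$: interior vertices of cycles and of type~(a) and type~(c) paths are matched by $M_{\min}$, vertices on which $M^{*}$ and $M_{\min}$ already agree trivially are, and the swap on a type~(c) path exchanges the unusable $V_{2}$-endpoint for the $V_{1}$-endpoint. It remains to show each type~(c) swap does not increase cost, i.e.\ the $M_{\min}$-edges on such a path cost at most the $M^{*}$-edges. This follows from optimality of $M_{\min}$: performing the reverse swap inside $M_{\min}$ produces another matching of $U$ into $V$ whose image differs only by exchanging the $V_{1}$-endpoint for the $V_{2}$-endpoint, hence has cost $\geq \cost(M_{\min})$. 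Summing across all type~(c) paths yields $\cost(M') \leq \cost(M^{*})$, which gives the desired inequality. The main delicate step is the path classification; once it is established, the exchange and the cost bound follow almost automatically.
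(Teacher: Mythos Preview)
Your proof is correct and uses essentially the same alternating-path argument as the paper: both look at the symmetric difference $M_{\min}\oplus M^{*}$, identify an even-length path connecting a vertex of $M^{*}[V]\setminus M_{\min}[V]$ to a vertex of $M_{\min}[V]\setminus M^{*}[V]$, and use optimality of $M_{\min}$ to bound the cost of swapping along it. The only presentational difference is that the paper argues by contradiction, treating one such path at a time (choosing $M^{*}$ to maximise $|M^{*}[V]\cap M_{\min}[V]|$ and deriving a contradiction from the three possible signs of the path cost), whereas you give a direct construction that first classifies all paths of $S$ and then swaps simultaneously along every type~(c) path; your version is slightly more explicit but not materially different.
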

\begin{proof}
Assume the opposite, i.e., there is some $A \subseteq U$ such that for any cheapest matching $M$ of $A$ to $V$ we have $M[V]\not\subseteq M_{\min}[V]$. Fix such $A$ and take any cheapest matching $M$ of $A$ to $V$. If there are multiple such $M$, take the one with the largest $|M[V]\cap M_{\min}[V]|$.
Then look at $M\oplus M_{\min}$, which is the set of edges belonging to exactly one of $M$ and $M_{\min}$. It consists of node-disjoint alternating cycles and alternating paths, and the 
alternating paths can be of either odd or even length. Because $M[V]\not\subseteq M_{\min}[V]$, there is a vertex $x\in V$ such that $x\in M[V]\setminus M_{\min}[V]$.
It is clear that there is a (nontrivial) path $P$ starting at $x$, as $x\in M[V]$ but $x\notin M_{\min}[V]$. We want to argue that its length is even.
Its first edge comes from $M$, so if the total length is odd, then
the last edge comes from $M$ as well, so the path ends at a vertex $y\in A$. But all such $y$ are matched in $M_{\min}$, so $P$ cannot end there. Hence it ends at a vertex $y\in 
M_{\min}[V]\setminus M[V]$, and its length is even. Now we consider three cases depending on the sign of $\cost(P)$, which is the total weight of all edges in $P\cap M_{\min}$ minus the 
total weight of all edges in $P\cap M$:
\begin{enumerate}
\item if $\cost(P)>0$ then $M_{\min}\oplus P$ is cheaper than $M_{\min}$, so $M_{\min}$ was not a cheapest matching of $U$ to $V$,
\item if $\cost(P)<0$ then $M\oplus P$ is cheaper than $M$, so $M$ was not a cheapest matching of $A$ to $V$,
\item if $\cost(P)=0$, then $M'=M\oplus P$ is a cheapest matching of $A$ to $V$, and $|M'[V]\cap M_{\min}[V]|>|M[V]\cap M_{\min}[V]|$, so $M$ was not a cheapest matching of $A$ to $V$ with the largest $|M[V]\cap M_{\min}[V]|$ in case of a tie.
\end{enumerate}
Hence there is a cheapest matching $M$ of $A$ to $V$ such that $M[V]\subseteq M_{\min}[V]$.
\qed
\end{proof}

\begin{lemma}\label{matching-alg}
With a read-only constant-time access to a weighted complete bipartite graph $G=(U\cup V,U\times V,c)$, where $|U|\leq |V|$, we can find a cheapest matching of $U$ to $V$ in $\bigo(|U|^{3}+|U||V|)$ time and $\bigo(|V|)$ space.
\end{lemma}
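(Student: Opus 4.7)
The plan is to sparsify the bipartite graph so that only $O(|U|^{2})$ edges remain relevant, and then solve the reduced assignment problem with a standard algorithm. First, for every $u\in U$, I would use a linear-time selection on the array $(c(u,v))_{v\in V}$ to extract the set $N_u\subseteq V$ of its $|U|$ cheapest neighbors. Using a single $O(|V|)$ scratch buffer shared across all rows, this costs $O(|V|)$ per row and $O(|U|\,|V|)$ in total; the lists $N_u$ together occupy $O(|U|^{2})$ space.

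The next step is to show that restricting each $u$ to $N_u$ is lossless. Take any cheapest matching $M$ of $U$ to $V$ and suppose some $u$ has $M(u)\notin N_u$. Since $M$ matches only $|U|$ elements of $V$, and one of them ($M(u)$) lies outside $N_u$, at most $|U|-1$ elements of $N_u$ are matched by $M$, so some $v\in N_u$ is unmatched. Swapping $u$'s partner from $M(u)$ to $v$ changes the cost by $c(u,v)-c(u,M(u))\le 0$, because $v\in N_u$ while $M(u)\notin N_u$. The new matching is therefore also cheapest and strictly reduces the number of violating vertices, so iterating produces a cheapest matching that respects every $N_u$.

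It remains to compute a cheapest matching in the bipartite graph on $U\cup V'$ with $V'=\bigcup_{u}N_u$ and edge set $\{(u,v):v\in N_u\}$. This graph has $|U|$ left vertices, at most $|U|^{2}$ right vertices and at most $|U|^{2}$ edges, so running the Hungarian algorithm as $|U|$ successive Dijkstra-based shortest-augmenting-path computations yields an $O(|U|^{3})$ time and $O(|U|^{2})$ auxiliary space bound. Combined with the $O(|V|)$ scratch from the first step, the overall workspace is $O(|V|+|U|^{2})$, which collapses to $O(|V|)$ in the regime where the lemma is invoked ($|U|^{2}\le|V|$, and in particular in the application to $\GETSP$).

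The only subtle point is the pigeonhole observation behind the sparsification claim: because only $|U|$ vertices of $V$ are ever matched in any perfect matching of $U$ into $V$, keeping each $u$'s $|U|$ best neighbors cannot hurt. Everything else---the linear-time selection, the Hungarian subroutine on the sparsified graph, and the space accounting---is standard once this reduction is in place.
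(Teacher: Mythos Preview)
Your sparsification argument (keep each $u$'s $|U|$ cheapest neighbours) is correct for optimality, and the paper makes exactly this observation. The gap is in the space bound. Storing all the lists $N_u$ simultaneously costs $\Theta(|U|^{2})$ workspace, which the lemma does \emph{not} permit: it promises $\bigo(|V|)$ space under the sole hypothesis $|U|\le|V|$. Your escape clause, that ``$|U|^{2}\le|V|$ holds in the application'', is false: in Theorem~\ref{GETSP} the left side is $W=V\times V$ with $|W|=n^{2}$ and the right side is $T$ with $|T|=m$, and the only standing assumption is $m\ge n^{2}$, i.e.\ $|W|\le|T|$; nothing forces $|W|^{2}=n^{4}\le m$. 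The paper flags precisely this obstacle (``the space complexity changes to $\bigo(p^{2})$, which might be larger than $\bigo(p+q)$, so we need an additional idea'') and resolves it by a batching trick: rather than precomputing all $|U|$ cheapest neighbours of each $u$ once, it maintains for each $u$ a list of only $|V|/|U|$ cheapest not-yet-matched neighbours, refreshed every $|V|/|U|$ augmentations; each refresh is done by block-wise linear-time selection in $\bigo(|V|)$ time and $\bigo(|V|/|U|)$ space per list, so the total auxiliary space never exceeds $\bigo(|V|)$ while each Dijkstra runs in $\bigo(|U|^{2}+|V|)$ time.

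A secondary point: running the Hungarian algorithm on your sparsified graph ($|U|$ left vertices, up to $|U|^{2}$ right vertices, $|U|^{2}$ edges) does not obviously take $\bigo(|U|^{3})$ time; Dijkstra with Fibonacci heaps gives $\bigo(|U|^{2}\log|U|)$ per augmentation and hence $\bigo(|U|^{3}\log|U|)$ overall. This is minor next to the space issue, but removing the logarithm would need its own argument.
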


\begin{proof}
Let $p=|U|$ and $q=|V|$. The naive approach would be to find a cheapest matching using $p$ iterations of Dijkstra's algorithm implemented with a Fibonacci heap~\cite{FredmanFibonacci}, which uses $\bigo(p+q)$ space and $\bigo(p((p+q)\log(p+q)+pq))$ total time. We want to smaller time complexity when $p$ is significantly smaller than $q$.

The first straightforward observation is that we can remove all but at most $p^{2}$ nodes from $V$, because we only need to keep, for every $u\in U
$, its $p$ cheapest neighbors from $V$.  By running the aforementioned algorithm on such truncated graph, the total time becomes $\bigo(p^{4})$, but 
the space complexity changes to $\bigo(p^{2})$, which might be larger than $\bigo(p+q)$, so we need an additional idea.

We briefly recap how to use the Dijkstra's algorithm to compute a cheapest matching. We start with an empty matching and iteratively extend the current 
matching $M$ using the cheapest augmenting path. An augmenting path connects an unmatched vertex $u\in U$ with an unmatched vertex $v\in V$ and 
alternates between the nodes of $U$ and $V$. To find the cheapest augmenting path, for every $(u,v)\not\in M$ we create an edge $u\rightarrow v$ with 
a cost of $c(u,v)-\pi_{u}+\pi_{v}$, and for every $(u,v)\in M$ we create an edge $v\rightarrow u$ with a cost of $-c(u,v)+\pi_{u}-\pi_{v}$. The {\it 
potentials} $\pi_{u}$ and $\pi_{v}$ are initially all equal to zero, and then maintained so that the costs of all directed edges are nonnegative, so that we can 
apply the Dijkstra's algorithm to find the cheapest augmenting path. Now consider a single iteration. Let $E_{V}$ be the set of edges incident to the already
matched vertices of $U$. To correctly find the cheapest augmenting path, it is enough to consider, for every $u\in U$, only the cheapest incident edge which does not belong to $E_{V}$. This reduces the complexity of a single iteration to $\bigo(p\log p+|E_{V}|)=\bigo(p^{2})$, assuming that we can quickly extract that cheapest edge for every $u\in U$. To accelerate the extraction, for every $u\in U$ we generate a list $E_{u}$ of $q/p$ cheapest edges incident to $u$ and not belonging to $E_{V}$. The lists are recalculated every $q/p$ iterations. Then, in every iteration, for every $u\in U$ we know that the cheapest incident edge which does not belong to $E_{V}$ belongs to the current $E_{u}$, hence it is enough to run the Dijkstra's algorithm on $|E_{V}+\cup_{u\in U}E_{u}|=\bigo(p^{2}+q)$ edges, which takes $\bigo(p\log p+p^{2}+q)=\bigo(p^{2}+q)$ time and requires $\bigo(p+q)$ space. Because in every iteration exactly one node $v\in V$ becomes matched, recalculating the lists $E_{u}$ every $q/p$ iterations is enough.

Now we analyze how much time do we need to generate every $E_{u}$. We claim that every $E_{u}$ can be found in $\bigo(q)$ time and $\bigo(q/p)$ space. We partition the sequence of all (at most) $q$ edges incident to $u$ and not belonging to $E_{V}$ into blocks of length $q/p$ and process the blocks one-by-one. After processing the first $k$ blocks, we know the $q/p$ smallest elements in the corresponding prefix of the sequence. To process the next block, we take these $q/p$ known smallest elements, add all elements in the current block, and use the linear time median selection algorithm~\cite{median} to select the $q/p$ smallest elements in the resulting set of $2q/p$ numbers. After all blocks are processed, we have exactly the $q/p$ smallest elements of the whole original sequence. The total time complexity is $\bigo(q/p)$ per every block, so $\bigo(q)$ in total, and we clearly need only $\bigo(q/p)$ space.

In every iteration we spend $\bigo(p^{2}+q)$ time to run the Dijkstra's algorithm. Additionally,
every $q/p$ iterations we need $\bigo(q)$ time to recompute the lists $E_{u}$. Hence the total time is $\bigo(p^{3}+pq)$. The space usage is clearly $\bigo(p+q)$.
\qed
\end{proof}

\begin{theorem}\label{GETSP}
Given an instance of $\GETSP$ with $m \geq n^2$, we can find $T_{0}\subseteq T$ of size $m-n^2$, such that there is an optimal solution in which the paths built on pairs from $T_{0}$ are all redundant, in $\bigo(mn^2 + n^6)$ time and $\bigo(m)$ space.
\end{theorem}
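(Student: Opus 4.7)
The plan is to encode the choice of terminal pairs used by important paths as a bipartite matching problem and then invoke Lemmas~\ref{matching} and~\ref{matching-alg}. I would let the left side $U$ of the auxiliary bipartite graph consist of all ordered pairs $(u,u')$ with $u,u'\in V$, so that $|U|=n^{2}\leq m$; take the right side to be $T$; and equip each edge between $(u,u')$ and $(t,t')$ with the weight $c((u,u'),(t,t'))=d(t,u)+d(u',t')-d(t,t')$ already indicated in Fig.~\ref{fig:paths}(c). The first step is to compute a cheapest matching $M_{\min}$ of $U$ to the right side via Lemma~\ref{matching-alg}: since every weight is computable in $O(1)$ from the stored coordinates, the graph is accessed purely as a read-only oracle, and the call runs in $O(|U|^{3}+|U|m)=O(n^{6}+mn^{2})$ time and $O(m)$ space. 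The output is then $T_{0}=T\setminus M_{\min}[T]$, which has size exactly $m-n^{2}$.

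For correctness I would fix any optimal solution $S$, take each of its important paths $\langle t,u_{1},\ldots,u_{\ell},t'\rangle$, and add and subtract $d(t,t')$. This splits the cost of $S$ into three additive contributions that do not interact: the constant $\sum_{(t_{i},t'_{i})\in T}d(t_{i},t'_{i})$; the total length of the inner portions $\langle u_{1},\ldots,u_{\ell}\rangle$, which depends only on how the inner points are partitioned into middle parts; and the \emph{matching cost} $\sum_{(u,u')\in A}[d(t,u)+d(u',t')-d(t,t')]$, where $A\subseteq U$ collects the ordered endpoints of the middle parts of $S$ and the $(t,t')$ inside the sum is the terminal pair that $S$ assigns to $(u,u')$. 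The set $A$ is a genuine subset of $U$ because distinct important paths share no inner points, and only the third contribution depends on how middle parts are paired with terminal pairs.

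Holding the middle parts of $S$ fixed, I can therefore replace its terminal assignment by any matching of $A$ into $T$ of equal or smaller weight without affecting the other two contributions. This is exactly where Lemma~\ref{matching} applies: instantiating its $V$ with our $T$, it yields $\cost(A,T)=\cost(A,M_{\min}[T])$, so there is a matching of $A$ into $M_{\min}[T]$ whose weight does not exceed that of the assignment used in $S$. Substituting this matching into $S$ produces an equally cheap, hence optimal, solution whose important paths all live on terminal pairs in $M_{\min}[T]$, leaving every pair in $T_{0}$ to carry a redundant single-edge path.

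The step I expect to require real care is staying inside the $O(m)$ space budget, since the auxiliary graph has $n^{2}m$ edges and must never be materialized explicitly; this is precisely what Lemma~\ref{matching-alg} was designed for, so the remaining obstacle reduces to checking that the weight function admits constant-time random access (it does, as $d$ is computed from two coordinates) and that $|U|\leq m$ (guaranteed by the hypothesis $m\geq n^{2}$). Summing everything yields the claimed $O(mn^{2}+n^{6})$ running time in $O(m)$ space.
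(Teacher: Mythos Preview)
Your proposal is correct and follows essentially the same approach as the paper: both set up the bipartite graph with $V\times V$ on the left and $T$ on the right, use the weight $d(t,u)+d(u',t')-d(t,t')$, decompose the cost of a solution into the fixed hull contribution plus inner-part lengths plus a matching cost, invoke Lemma~\ref{matching} to restrict the important terminal pairs to $M_{\min}[T]$, and compute $M_{\min}$ via Lemma~\ref{matching-alg}. Your write-up is in fact slightly more explicit than the paper's about the three-way cost decomposition and about why $A$ is a genuine subset of $U$.
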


\begin{proof}
Let $W = V \times V$ and $G = (W \cup T, W\times T,c)$ be a weighted complete bipartite graph with $W$ and $T$ as the left and right vertices, respectively. 
The weight of an edge connecting $(v,v')$ and $(t,t')$ is defined as
$c((v,v'),(t, t'))=d(t,v)+d(v',t')-d(t, t')$.
Informally, given a path $\left\langle v,\ldots,v'\right\rangle$ consisting of inner points, $c((v,v'),(t,t'))$ is the cost of replacing a redundant path $\left\langle t,t'\right\rangle$ with an important path $\left\langle t,v,\ldots,v',t'\right\rangle$, assuming that we have already taken into the account the length of the inner part $\left\langle v,\ldots,v'\right\rangle$, see Fig.~\ref{fig:paths}(c). Now any solution corresponds to a matching in $G$, because
for every important path $\left\langle t_{i},v_{i},\ldots,v_{i}',t_{i}'\right\rangle$ we can match $(v_{i},v_{i}')$ to $(t_{i},t_{i}')$. More precisely, if we
denote by $i_{1}<\ldots<i_{s}$ the indices of all these important paths and fix their inner parts $\langle v_{i_{j}},\ldots,v'_{i_{j}}\rangle$, then the solution corresponds to a matching of $W'=\{(v_{i_{1}},v'_{i_{1}}),\ldots,(v_{i_{s}},v'_{i'_{s}})\}$ to $T$, and the cost of the solution is equal to the total length of all inner parts plus $\sum_{i}d(t_{i},t'_{i})$ plus the cost of the matching.
In the other direction, any matching of $W'$ to $T$ corresponds to a solution with the given set of inner parts (but possibly different indices of important paths). The cost of that solution is, again, equal to the total length of all inner parts plus $\sum_{i}d(t_{i},t'_{i})$ plus the cost of the matching, so any cheapest matching corresponds to an optimal solution. By Lemma~\ref{matching} we know, that $\cost(W',M_{\min}[T])=\cost(W',T)$, so there always is a cheapest matching of $W'$ to $T$
which uses only the nodes in $M_{\min}[T]$, where $M_{\min}$ is a cheapest matching of $W$ to $T$ in the whole $G$. 
Therefore, we can set $T_{0} =T\setminus M_{\min}[T]$, because there is at least one optimal solution, where the paths built on pairs from such $T_{0}$ are all redundant. Clearly, $|T_{0}| = m-n^2$. Finally, we can use Lemma~\ref{matching-alg} to find a cheapest matching, as we can implement read-only access to any $c((v,v'),(t,t'))$ without explicitly storing the graph, so the total space usage is $\bigo(n)$ and the total time complexity is $\bigo(mn^{2}+n^{6})$ as claimed.
\qed
\end{proof}

\section{Searching over separators}\label{sec:searching}

In this section we briefly recap the method of searching over separators used in~\cite{separators} to solve 
the \emph{Euclidean Traveling Salesman Problem}. At a high level, it is a divide-and-conquer algorithm. We know that an optimal solution
has no self-intersections, hence we can treat it as a planar graph. Every planar graph has a
small simple cycle separator, which is a simple cycle, which can be removed as to split the whole graph into smaller pieces.
Such separator can be used to divide the original problem into smaller subproblems, which are then solved recursively.

\begin{theorem}[Miller~\cite{miller}]\label{Miller}
In any $2$-connected planar graph with nonnegative weights summing up to $1$ assigned to nodes, there exists a simple cycle, called a
simple cycle separator, on at most $2\sqrt{2\left \lfloor{d/2} \right \rfloor N}$ vertices, dividing the graph into the interior and the exterior part,
such that the sum of the weights in each part is at most $\frac{2}{3}$, where $d$ is the maximum face size and $N$ is the number of nodes.
\end{theorem}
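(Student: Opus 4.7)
The plan is to follow Miller's original route via circle packings, which converts the combinatorial separator question into a geometric question on the sphere. Since $G$ is $2$-connected and planar, the Koebe--Andreev--Thurston theorem gives a representation of $G$ on the sphere $S^{2}$ by a family of closed spherical caps $\{D_{v}\}_{v\in V}$ with pairwise disjoint interiors, such that $D_{u}$ and $D_{v}$ are tangent exactly when $uv$ is an edge. The representation is unique up to Möbius transformations of $S^{2}$. First I would exploit that remaining freedom to move the \emph{weighted} centerpoint of the cap centres (using the given nonnegative weights, which sum to $1$) to the centre of the sphere; existence of such a normalization is a standard topological fixed-point argument on the group of Möbius transformations of $S^{2}$, and it guarantees that every great circle will induce a partition where each open hemisphere carries total weight at most $\tfrac{2}{3}$.

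Second, I would pick a uniformly random great circle on $S^{2}$. By the centerpoint normalization, the set $C$ of vertices whose caps are cut by the chosen circle automatically yields a balanced partition in the sense required by the theorem, so what remains is to control $|C|$. If $r_{v}$ denotes the angular radius of $D_{v}$, then the probability that a uniformly random great circle cuts $D_{v}$ is proportional to $\sin r_{v}$, hence
\[
\mathbb{E}|C| \;=\; \Theta\!\Bigl(\sum_{v}\sin r_{v}\Bigr).
\]
Applying Cauchy--Schwarz gives $\sum_{v}\sin r_{v}\le\sqrt{N\sum_{v}\sin^{2}r_{v}}$, so everything reduces to bounding $\sum_{v}\sin^{2}r_{v}$.

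Third, I would prove $\sum_{v}\sin^{2}r_{v}=O(\lfloor d/2\rfloor)$ by a packing/area argument on $S^{2}$: each cap $D_{v}$ has spherical area $\Theta(\sin^{2}r_{v})$; the caps have disjoint interiors and their union covers a sphere of constant total area; and the factor $\lfloor d/2\rfloor$ is paid when we ``fatten'' each cap so that caps lying on the same face of the planar embedding become comparable in size to one another, which is where the maximum face size $d$ enters. Picking a great circle realising at most the expectation then yields $|C|\le 2\sqrt{2\lfloor d/2\rfloor N}$.

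Finally, I would turn $C$ into an \emph{honest} simple cycle. The cyclic order in which the chosen great circle meets the caps of $C$, together with the tangency relation (which is exactly the edge relation of $G$), gives a cyclic sequence of pairwise adjacent vertices that together bound both hemispheres, i.e., a simple cycle separator. The hard step I would expect is the area bound $\sum_{v}\sin^{2}r_{v}=O(\lfloor d/2\rfloor)$ with the correct constant so that the final count reads exactly $2\sqrt{2\lfloor d/2\rfloor N}$, and the verification that the ``ring'' of cut caps really closes up into a single simple cycle rather than several disjoint arcs. Both come down to careful bookkeeping with the face structure of the planar embedding, but neither requires ideas beyond the centerpoint paradigm set up above.
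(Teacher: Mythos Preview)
The paper does not prove this theorem at all: it is quoted verbatim from Miller's paper and used as a black box. There is therefore nothing in the paper to compare your argument against; any proof you supply is strictly additional content.

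That said, your proposed route has two substantive problems. First, it is not Miller's original argument. Miller's 1986 proof is purely combinatorial: one performs a breadth-first search, observes that some BFS level has at most $\sqrt{2\lfloor d/2\rfloor N}$ vertices by pigeonhole on the levels together with an Euler-formula count relating level sizes to face sizes, and then closes that level into a simple cycle using a spanning-tree chord; the constant $2\sqrt{2\lfloor d/2\rfloor N}$ and the explicit dependence on $d$ fall out of that level-counting argument directly. The circle-packing/centerpoint method you describe is the later Miller--Teng--Thurston--Vavasis approach, and in its standard form it produces a vertex separator of size $O(\sqrt{N})$, not a simple cycle separator with the stated $d$-dependent constant.

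Second, the two places you flag as ``careful bookkeeping'' are in fact the missing ideas. The claim $\sum_v \sin^2 r_v = O(\lfloor d/2\rfloor)$ does not follow from disjointness of the caps: disjoint caps on the unit sphere give $\sum_v \sin^2 r_v = O(1)$ regardless of $d$, so no ``fattening'' argument can manufacture a factor of $\lfloor d/2\rfloor$ in the numerator of the area bound; the $d$ in Miller's statement enters through the BFS/face-count argument, not through cap areas. And the step from ``caps cut by a great circle'' to ``simple cycle in $G$'' genuinely fails when $d>3$: consecutive cut caps along the circle lie on a common face but need not be adjacent in $G$, so the ring of cut caps is not a cycle of $G$ without inserting up to $\lfloor d/2\rfloor$ extra vertices per gap, which destroys the size bound you derived. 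If you want a self-contained proof, you should reproduce Miller's BFS-level argument rather than the geometric separator framework.
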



Consider an instance of $\GETSP$ and its optimal solution, which by the assumption has no self-intersections, so there exists
a planar graph such that any edge of the solution appears there. Then by Theorem~\ref{Miller} there is a simple cycle
on at most $2\sqrt{2\lfloor d/2\rfloor (n+2m)}$ nodes such that any edge of the solution is either completely outside, completely inside, or lies
on the cycle, and furthermore there are at most $\frac{2}{3}(n+2m)$ points in either the exterior and the interior part.
We want the cycle to be small, so we need to bound $d$. For all inner faces, this can be ensured by simply triangulating them. To ensure
that the outer face is small, we add three enclosing points, see Fig.~\ref{fig:three}.

\begin{figure}[t]
\centering
\def \svgwidth{0.65\columnwidth}
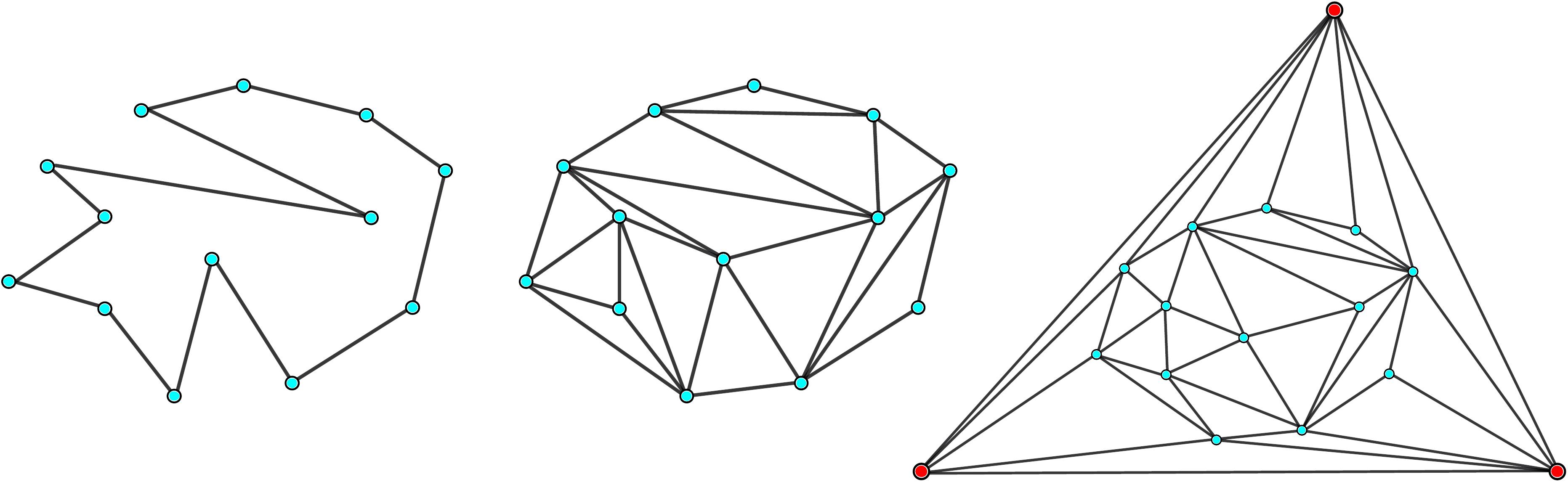
\caption{A solution and its corresponding graph with $d=9$, after adding the three enclosing points $d=3$.}
\label{fig:three}
\end{figure}

Now consider how the paths in the solution intersect with the simple cycle separator. Each path is either completely outside, completely inside, or
intersects with one of the nodes of the separator. Any such intersecting path can be partitioned into shorter 
subpaths, such that the endpoints of the subpaths are either the endpoints of the original paths or the nodes of the separator,
and every subpath is outside or inside, meaning that all of its inner nodes are completely outside or completely inside. This suggest that we can
create two smaller instances of $\GETSP$ corresponding to the interior and the exterior part of the graph, such that the solutions of these two
smaller subproblems can be merged to create the solution for the original problem, see Fig.~\ref{fig:tourcycle}.

\begin{figure}[b]
\centering
\def \svgwidth{0.94\columnwidth}
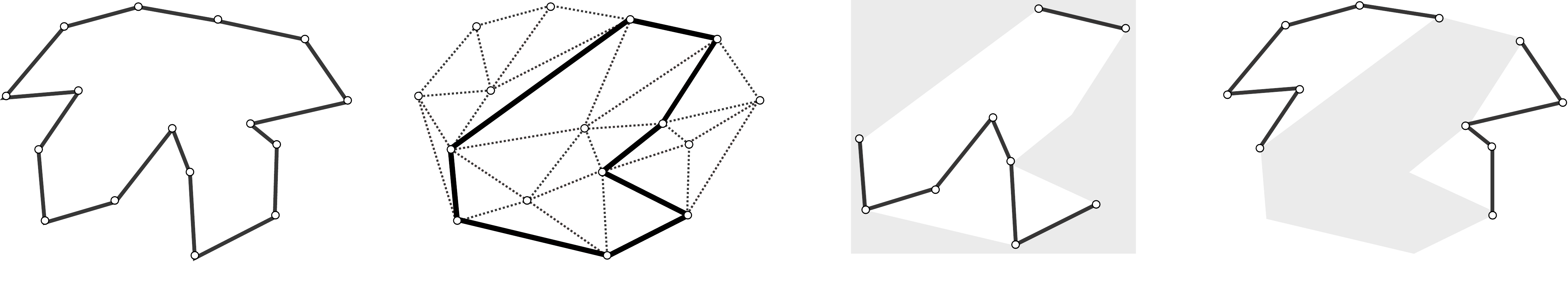
\caption{(a) A solution, (b) the triangulated planar graph and its simple cycle separator, (c) a solution to the interior subproblem, (d) a solution to the exterior problem.}
\label{fig:tourcycle}
\end{figure}

Of course we don't know the solution, so we cannot really find a simple cycle separator in its corresponding triangulated planar graph.
But the size of the separator is at most $c\sqrt{n+2m+3}$ for some constant $c$, so we can iterate over all possible simple cycles of such length,
and for every such cycle check if it partitions the instance into two parts of sufficiently small sizes. The number of cycles is at most $c\sqrt{n+2m+3} \binom{n+2m+3}{c\sqrt{n+2m+3}}(c\sqrt{n+2m+3})!$, which is $\bigo((n+2m)^{\bigo(\sqrt{n+2m})})$.

Similarly, because we don't know the solution, we cannot check how it intersects with our simple cycle separator. But, again, we can iterate over
all possibilities. To bound the number of possibilities, we must be a little bit more precise about what intersecting with the separator means. We create
a number of new terminal pairs. Every node of the separator appears in one or two of these new terminal pairs. 
Additionally, the new terminal pairs might
contain some of the original terminal points, under the restriction that for any original terminal pair, either none of its points are used in the
new terminal pairs, or both are (and in the latter case, we remove the original terminal pair).
Additional, there cannot exist a sequence of new terminal pairs creating a cycle, i.e.,
$(p_{1},p_{2}), \ldots, (p_{\ell-1},p_{\ell}), (p_{\ell},p_{1})$ with $\ell\geq 3$.
Then, for every new terminal pair $(p,p')$, we decide if its path lies fully within the exterior or the interior part
(if it directly connects two consecutive points on the cycle, we can consider it as
belonging to either part). Notice that if $p$ is one of the original terminal points, and $p'$ is a new terminal point, then the corresponding path
lies fully within the part where $p$ belongs to. One can see that such a choice allows us to partition the original problem into two smaller
subproblems, so that their optimal solutions can be merged to recover the whole solution, and that the subproblems are smaller instances
of $\GETSP$. Hence iterating over all choices and choosing an optimal solution in every subproblem allows us to find an optimal solution for the
original instance. To bound the number of choices, the whole process can be seen as partitioning the nodes of the separator into
ordered subsets, selecting two of the original terminal points for every of these subsets, and finally guessing, for every two nodes subsequent in
one of the subsets, whether the path connecting them belongs to the exterior or the interior part. We must also check if it holds that for
any original pair $(p,p')$ it holds that either none of its points was selected, or both of them were, but even without this last easy check
the number of possibilities is bounded by $B_{c\sqrt{n+2m+3}}(c\sqrt{n+2m+3})!\binom{2m}{2c\sqrt{n+2m+3}} 2^{c\sqrt{n+2m+3}}$,
where $B_{s}$ is the $s$-th Bell number. This is, again, $\bigo((n+2m)^{\bigo(\sqrt{n+2m})})$.

The algorithm iterates over all separators and over all possibilities of how the solution intersects with each of them. For each choice, it recurses on
the resulting two smaller subproblems, and combines their solutions. Even though we
cannot guarantee that all optimal solutions in these subproblems have no self-intersections, any optimal solution
to the original problem has such property, so for at least one choice the subproblems will have such property, which is enough for the correctness.
Because the size of every subproblem is at most $b=\frac{2}{3}(n+2m)+c\sqrt{n+2m+3}$, the recurrence for the total running time is
$T(n+2m) = \bigo((n+2m)^{\bigo(\sqrt{n+2m})}) \cdot 2T(b)$.
For large enough $n+2m$, we have that $b \leq \frac{3}{4}(n+2m)$, and the recurrence solves to
$T(n+2m)=\bigo((n+2m)^{\bigo(\sqrt{n+2m})})$. The space complexity is linear, because we only need to generate the subproblems,
which requires iterating over all subsets and all partitions into ordered subsets, and this can be done in linear space.

\section{$\GETSPH$}

To extend the divide-and-conquer algorithm described in the previous section, we need to work with a slightly extended version of $\GETSP$, which
is more sensitive to the number of terminal pairs such that both points belong to the convex hull. We call the extended version $\GETSPH$, and define
its size to be $n+2m+2\ell$.
Given an instance of $\ETSP$, we can reduce the problem to solving an instance of $\GETSPH$ with $|V|=k$, $|T|=0$, and $|H|=n-k$.

\begin{framed}
\noindent \emph{\textbf{Generalized\,Euclidean\,Traveling\,Salesman\,Problem}}\,$\GETSPH$
\\[5pt]
\noindent Given a set $V = \left\{v_{1},\dots,v_{n}\right\}$ of inner points, a set $T = \left\{(t_{1}, t'_{1}),\dots,(t_{m},t'_{m})\right\}$ of terminal pairs of 
points, and a set $H = \left\{(h_{1}, h'_{1}),\dots,(h_{\ell},h'_{\ell})\right\}$ of hull pairs of points, where for any $i$ the point $h_{i}$ and $h'_{i}$ are 
neighbors on the convex hull of the set of all points\footnotemark, find a set of $m+\ell$ paths with the smallest total length such that:
\begin{enumerate}
\item the $i$-th path is built on $(t_{i},t'_{i})$, for $i=1,2,\ldots,m$,
\item the $m+i$-th path is built on $(h_{i},h'_{i})$, for $i=1,2,\ldots,\ell$,
\item every $v_{i}$ is included in exactly one of these paths,
\end{enumerate}
assuming that in any optimal solution the paths have no self-intersections, and no path intersects other path,
except possibly at the ends.
\end{framed}
\footnotetext{Other points given in the input might or might not lie on the convex hull.}

We will show that if $\ell=\poly(n)$, then $\GETSPH$ can be solved in $\bigo((n+2m)^{\bigo(\sqrt{n+2m})})$ time and linear space using an extension of the 
method from the previous section. Combined with Theorem~\ref{GETSP}, this gives an $\bigo(nk^2+k^{\bigo(\sqrt{k})})$ time and linear space 
solution for $\ETSP$. First we extend Theorem~\ref{GETSP}.

\begin{lemma}\label{GETSPH}
Take an instance of $\GETSPH$ with $n = |V|$, $m = |T|$, and $\ell = |H|$. If $m+\ell > n^2$ then in $\bigo((m+\ell)n^{2} + n^6)$ time and
$\bigo(m+\ell)$  space we can find $T_{0}\subseteq T$ and $H_{0}\subseteq H$ such that $|T_{0}| + |H_{0}| = m+\ell-n^2$ and there is an optimal
solution in which the paths built on pairs from $T_{0} \cup H_{0}$ are all redundant.
\end{lemma}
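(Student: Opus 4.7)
The plan is to imitate the proof of Theorem~\ref{GETSP} almost verbatim, but bundling $T$ and $H$ together into a single set of ``endpoint-pair'' vertices on the right side of the bipartite graph. The reason this goes through is that the hull pairs play exactly the same structural role as the terminal pairs as far as the matching argument is concerned: each path in the solution is built on exactly one pair from $T \cup H$, it is \emph{important} iff it includes at least one point from $V$, and otherwise it is \emph{redundant} (a single direct edge, which for a hull pair is harmless since $h_i$ and $h'_i$ are neighbors on the convex hull).

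Concretely, let $S = T \cup H$, so $|S| = m+\ell \geq n^2$. Set $W = V \times V$ and form the weighted complete bipartite graph $G = (W \cup S,\, W \times S,\, c)$ with
\[
  c\bigl((v,v'),(p,p')\bigr) = d(p,v) + d(v',p') - d(p,p')
\]
for $(v,v')\in W$ and $(p,p')\in S$, exactly as in the proof of Theorem~\ref{GETSP}. The same reasoning as there shows that any feasible solution corresponds to a matching of some $W' \subseteq W$ (the inner parts of the important paths, in order) to $S$: for each important path $\langle p,v,\ldots,v',p'\rangle$ we match $(v,v') \in W'$ to $(p,p')\in S$, and the total cost of the solution equals the total length of the inner parts plus $\sum_{i} d(t_i,t'_i) + \sum_j d(h_j,h'_j)$ plus the cost of the matching. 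Hence any cheapest matching of $W'$ to $S$, for the correct choice of $W'$, yields an optimal solution.

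Now apply Lemma~\ref{matching} with $U = W$ and $V = S$: if $M_{\min}$ is a cheapest matching of $W$ to $S$, then for every $W' \subseteq W$ there is a cheapest matching of $W'$ to $S$ whose right endpoints lie in $M_{\min}[S]$. Therefore we may set
\[
  T_0 = T \setminus M_{\min}[S], \qquad H_0 = H \setminus M_{\min}[S],
\]
and there is an optimal solution in which every path built on a pair in $T_0 \cup H_0$ is redundant. Since $|M_{\min}[S]| = |W| = n^2$, we get $|T_0| + |H_0| = m+\ell-n^2$ as required.

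For the complexity, invoke Lemma~\ref{matching-alg} with $p = n^2$ and $q = m+\ell$, noting that $c((v,v'),(p,p'))$ can be evaluated in $O(1)$ from the input without materialising $G$, so the read-only access assumption is satisfied. This costs $\bigo(p^3 + pq) = \bigo(n^6 + (m+\ell)n^2)$ time and $\bigo(p+q) = \bigo(m+\ell)$ space, which matches the claimed bounds. The only thing to watch out for is that computing $T_0$ and $H_0$ from $M_{\min}[S]$ must be done without building $S$ explicitly as a single array; this is a minor bookkeeping step since we may simply scan $T$ and $H$ separately and mark which of their pairs were touched by $M_{\min}$. I do not see any genuine obstacle beyond verifying that the hull pairs behave symmetrically to terminal pairs in the matching formulation, which they do because the only property of a pair $(p,p')$ used in the argument is that its path can be reduced to the direct edge when it contains no inner point.
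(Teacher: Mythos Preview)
Your proposal is correct and matches the paper's intended argument: the paper states Lemma~\ref{GETSPH} without proof as a direct extension of Theorem~\ref{GETSP}, and the extension is precisely to merge $T$ and $H$ into a single right-hand side $S$ of the bipartite graph and rerun the matching argument, exactly as you do. The only quibble is that your worry about ``not building $S$ explicitly as a single array'' is unnecessary, since $|S|=m+\ell$ already fits in the $\bigo(m+\ell)$ space budget.
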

Now applying the divide-and-conquer method described in the previous section directly together with the above lemma gives us a running time of
$\bigo((n+2m+2\ell)^{\bigo(\sqrt{n+2m+2\ell)}})=\bigo(n^{\bigo(n)})$, and we want to improve on that to get $\bigo(n^{\bigo(\sqrt{n})})$.

Recall that the recursive method described in the previous section iterates over simple cycle separators. Because now the (unknown) graph is on
$n+2m+2\ell$ vertices, the best bound on the length of the separator that we could directly get from Theorem~\ref{Miller} is $c\sqrt{n+2m+2\ell+3}$,
which is too large. But say that we can show that there exists a simple cycle separator of length $\bigo(\sqrt{n+2m+2})$, such that the value of $n+2m$ 
decreases by a constant factor in both parts. Iterating over all such simple cycle
separators takes $\bigo((n+2m+2\ell)^{\bigo(\sqrt{n+2m})})$ time, and iterating over all possibilities of how the separator intersects with the solution
then takes $B_{\bigo(\sqrt{n+2m})}\bigo(\sqrt{n+2m}!)\binom{n+2m+2\ell}{\bigo(\sqrt{n+2m})} 2^{\bigo(\sqrt{n+2m})}$ time. All in all, the
total number of possibilities becomes $\bigo((n+2m+2\ell)^{\bigo(\sqrt{n+2m})})$, which assuming that $\ell=\poly(n)$ is 
$\bigo((n+2m)^{\bigo(\sqrt{n+2m})})$. Applying this reasoning in a recursive manner as in the previous section results in Algorithm~\ref{algorithm}.
Compared to the algorithm from the previous section, the changes are as follows:
\begin{enumerate}
\item we reduce the number of terminal and hull pairs using Lemma~\ref{GETSPH} in line~\ref{line:reduce},
\item we add just two enclosing points (instead of three) in line~\ref{line:enclosing},
\item  when forming the subproblems in line~\ref{line:form}, we might connect both some terminal points and some hull points with the nodes
of the separator, and in the latter case, the new pair always becomes a terminal pair in the subproblem.
\end{enumerate}
If $\ell=\poly(n)$ in the original instance, then we can maintain such invariant in all recursive calls without increasing
the running time, because the (polynomial) cost of the reduction in a subproblem can be charged to its parent. Therefore, because the value of
$n+2m$ decreases by a constant factor in both subproblems, the total time is $\bigo((n+2m)^{\bigo(\sqrt{n+2m})})$ by the same recurrence as previously.

\begin{algorithm}[t]
\caption{For solving $\GETSPH$.}
\label{algorithm}
\begin{algorithmic}[1]
\algtext*{EndIf}
\algloopdefx{NoEndIf}[1]{\textbf{if} #1 \textbf{then}}
\algloopdefx{NoEndIfD}[1]{\indent \indent \textbf{if} #1 \textbf{then}}
\algloopdefx{NoEndForAll}[1]{\textbf{for each} #1 \textbf{do}}
\algloopdefx{NoEndForAllD}[1]{\indent \textbf{for each} #1 \textbf{do}}
\If{$V = \emptyset$} \textbf{return} all edges directly connecting the pairs in $T \cup H$
\EndIf
\If{$m+\ell > n^2$} \label{alg:reducestart} \label{line:reduce}
\State Apply Lemma~\ref{GETSPH} to find $T_{0}$ and $H_{0}$. \Comment{$\bigo((m+\ell)n^{2}+n^{6})$}
\State Directly connect the redundant pairs in $T_{0}\cup H_{0}$.
\EndIf
\State Add two enclosing points $I_{1}$ and $I_{2}$. \label{line:enclosing}
\NoEndForAll {ordered subset $C$ of all points with $|C|\leq c\sqrt{n+2m+2}$}
\State \:\:Check if $C$ forms a simple cycle.
\State \indent Check if there are at most $\frac{2}{3}(n+2m)$ inner and terminal points in either part.
\NoEndForAllD {possibility of how the solution intersects with $C$}
\State \:\:\indent Form the exterior subproblem and the interior subproblem. \label{line:form}
\State \indent\indent Recursively solve the exterior subproblem.
\State \indent\indent Recursively solve the interior subproblem.
\State \indent\indent Combine the solutions for the subproblems and update the best solution.
\State \textbf{return} the best solution found in the whole process.
\end{algorithmic}
\end{algorithm}

Now the goal is to prove that it is enough to consider simple cycle separators of length $c\sqrt{n+2m+2}$. To this end, we will prove that
there exists a planar graph with the following properties:
\begin{enumerate}[(a)]
\item its set of nodes includes all inner and terminal points together with the two enclosing points,
and possibly some hull points,\label{prop:1}
\item any edge from the solution is either an edge in the graph, or lies within one of its faces,\label{prop:2}
\item all of its faces are of size at most $4$ and its size is $\bigo(n+2m)$.\label{prop:3}
\end{enumerate}
If such a graph exists, then by Theorem~\ref{Miller} it has a simple cycle separator of size $\bigo(\sqrt{n+2m})$ due to (\ref{prop:3}).
Furthermore, by assigning equal weights summing up to one to all inner and terminal
points, which by (\ref{prop:1}) are nodes of the graph, and zero weights to the remaining nodes, we get a simple cycle separator which,
by (\ref{prop:2}), divides the original problem into two subproblems, such that the optimal solution to the subproblems can be combined to form
an optimal solution to the original problem, and there are at most $\frac{2}{3}(n+2m)$ inner and terminal points in every subproblem, so
Algorithm~\ref{algorithm} is correct. Before we show that such a graph exists, we provide the details of how to choose the enclosing points.

\begin{lemma}\label{enclosing}
For any set of points $A$ we can find two enclosing points $I_{1}$, $I_{2}$, lying outside $\hull(A)$, and two nodes of $\hull(A)$ called $v_{up}, v_{down}$, 
such that:
\begin{enumerate}
\item all points of $\hull(A)$ between $v_{up}$ and $v_{down}$ (clockwise) lie inside $\bigtriangleup I_{2}v_{up}v_{down}$, and 
all points of $\hull(A)$ between $v_{down}$ and $v_{up}$ lie inside $\bigtriangleup I_{1}v_{up}v_{down}$,
\item for any point $w$ of $\hull(A)$ between $v_{up}$ and $v_{down}$, $I_{1}w$ has no common point
with $\hull(A)$ except for $w$, and for any $w$ between $v_{down}$ and $v_{up}$, $I_{2}w$ has no common point
with $\hull(A)$ except for $w$.
\end{enumerate}
\end{lemma}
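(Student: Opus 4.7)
The plan is to take $v_{up}$ and $v_{down}$ as the topmost and bottommost vertices of $\hull(A)$ (breaking ties lexicographically), partitioning the hull boundary into a clockwise chain $C_2$ and a counter-clockwise chain $C_1$ between them. A key consequence of this extremal choice is that every intermediate hull vertex has $y$-coordinate strictly between those of $v_{up}$ and $v_{down}$. The enclosing points $I_1$ and $I_2$ will then be placed on opposite sides of the chord $\overline{v_{up}v_{down}}$ (on the $C_1$-side and the $C_2$-side respectively), each sufficiently far from $\hull(A)$ along the perpendicular to the chord.

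To obtain condition~(1), I observe that as $I_2$ slides to infinity along the perpendicular to $\overline{v_{up}v_{down}}$ on its chosen side, the triangle $\bigtriangleup I_2 v_{up}v_{down}$ tends to the half-strip on that side bounded by the horizontal lines through $v_{up}$ and $v_{down}$. By the extremality of $v_{up}$ and $v_{down}$, every vertex of $C_2$ lies strictly inside this half-strip, so once $I_2$ is far enough the strict containment of $C_2$ in $\bigtriangleup I_2 v_{up}v_{down}$ holds; the analogous argument, mirrored across the chord, handles $C_1$ and $I_1$. In each case the set of valid placements forms an open two-dimensional cone of positions.

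For condition~(2), I read the requirement ``$\overline{I_1 w}$ has no common point with $\hull(A)$ except $w$'' as the general-position statement that the segment passes through no hull vertex besides its declared endpoint; this is the condition needed for the subsequent straight-line embedding to be non-degenerate. Inside the open placement region from condition~(1), the positions of $I_1$ that make some segment $\overline{I_1 w}$ hit a third hull vertex form a finite union of lines, one per ordered pair of distinct hull vertices, and the same is true for $I_2$. A generic choice within the feasible region from condition~(1) therefore avoids every such forbidden collinearity.

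The main obstacle is confirming that the feasible region supplied by condition~(1) is genuinely two-dimensional so that the exclusion step for condition~(2) is not vacuous. The perpendicular-direction argument above secures this: it identifies an entire open cone of admissible placements far from $\hull(A)$, on which the triangle strictly contains the relevant chain, leaving ample room to perturb $I_1$ and $I_2$ away from the finitely many forbidden lines and satisfy both conditions simultaneously.
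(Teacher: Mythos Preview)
Your construction of $v_{up},v_{down}$ as the extremal vertices and of $I_1,I_2$ far out along the perpendicular to the chord is reasonable, and your limiting argument for condition~(1) is fine. The problem is your reading of condition~(2). In the paper $\hull(A)$ denotes the filled convex hull, so ``$\overline{I_j w}$ has no common point with $\hull(A)$ except $w$'' is a \emph{visibility} requirement: the whole segment from $I_j$ to the hull vertex $w$ must stay outside the hull and touch it only at $w$. (The intended pairing---somewhat obscured in the lemma statement---is $I_j$ with the vertices on its own side of the chord; the paper's proof of part~(2) makes this clear.) This is exactly what is needed later in Lemma~\ref{small_triang}, where triangles $\bigtriangleup I_j v_a v_b$ are added and must not cross the hull. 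Your perturbation argument, which only rules out $\overline{I_j w}$ passing through a \emph{third vertex}, does not establish this: a segment can avoid every vertex and still cross a hull edge and enter the interior.

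The paper's approach is more direct and avoids any limiting or genericity argument. After choosing $v_{up},v_{down}$ as the tangency points of two parallel supporting lines, it looks at the two hull edges incident to $v_{up}$ and the two incident to $v_{down}$, extends them, and takes $I_1$ (resp.\ $I_2$) anywhere in the open angular region $\alpha_1$ (resp.\ $\alpha_2$) vertically opposite the wedge $\beta_1$ (resp.\ $\beta_2$) that contains the hull. For any $w$ on $I_1$'s arc and any putative second intersection point $u\in \overline{I_1 w}\cap\hull(A)$, one gets $\bigtriangleup w\,v_{up}v_{down}\subseteq\bigtriangleup u\,v_{up}v_{down}$, forcing $w$ strictly inside $\hull(A)$, a contradiction. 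Your ``far perpendicular'' placement in fact lands inside these regions $\alpha_j$ for large enough distance, so your construction would satisfy the real condition~(2) as well---but your proof does not show this, and the collinearity-avoidance step you wrote is neither sufficient nor necessary.
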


\begin{figure}[b]
\centering
\def \svgwidth{0.7\columnwidth}
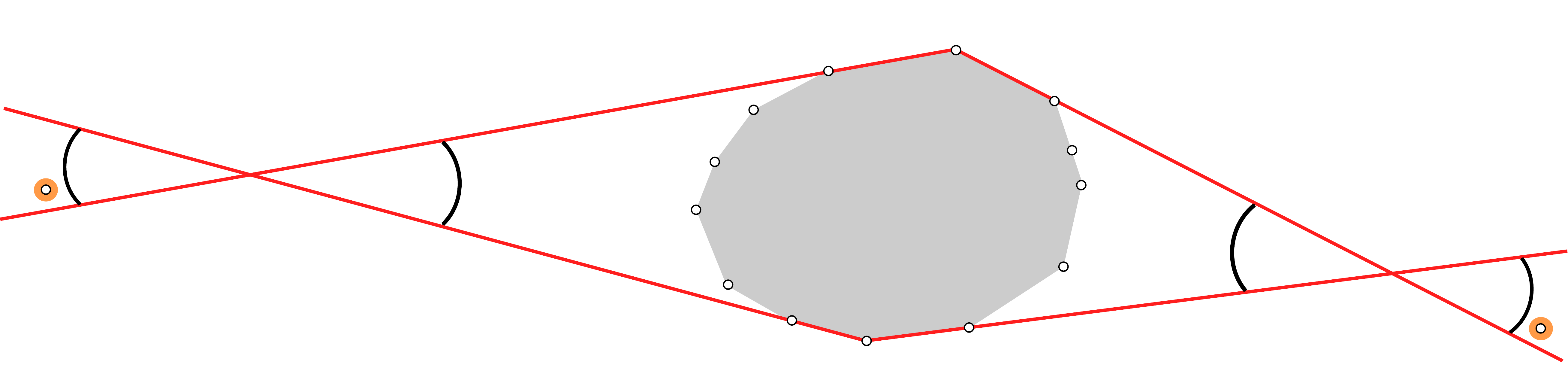
\caption{Choosing the enclosing points $I_{1}$ and $I_{2}$.}
\label{fig:zones}
\end{figure}

\begin{proof}
If $A$ contains less than two points, any two $I_{1}$ and $I_{2}$ are fine. For any larger $A$,
we can find two distinct parallel lines $k_{1}$ and $k_{2}$, such that each of them has exactly one common
point with $\hull(A)$. Call these common points $v_{up}$ and $v_{down}$, respectively. Let $y,z'$ be the neighbors of $v_{up}$ on $\hull(A)$ and $z,y'$ be 
neighbors of $v_{down}$, such that $y,y'$ are on the other side of $v_{up}v_{down}$ than $z,z'$. Consider the angle $\beta_{1}$ obtained by extending 
segments $v_{up}y$ and $v_{down}y'$, and the angle $\beta_{2}$ obtained by extending $v_{up}z'$ and $v_{down}z$. Finally, let $\alpha_{1}$ and $
\alpha_{2}$ be angles vertically opposite to $\beta_{1}$ and $\beta_{2}$, respectively, see Fig.~\ref{fig:zones}. Now we choose $I_{1}$ as any
point strictly inside $\alpha_{1}$ and $I_{2}$ as any point strictly inside $\alpha_{2}$. We must show that for such a choice both properties
hold. Because of the symmetry, it is enough to prove the first part of each of them.

\begin{enumerate}
\item
One of the properties of a convex hull is that all of its points lie inside the intersection of the halfplanes, which are defined by its segments.
The intersection of the halfplanes defined by the segments $v_{up}y$ and $v_{down}y'$ is precisely $\beta_{1}$. All points between $v_{up}$ and
$v_{down}$ in the counterclockwise order lie on the same side of $v_{up}v_{down}$ as $I_{1}$. Therefore they all lie inside the part of $\beta_{1}$ bounded 
by the segment $v_{up}v_{down}$. Due to our choice of $I_{1}$ this part lies inside $\bigtriangleup I_{1}v_{up}v_{down}$, and so the first propery holds.
\item
Assume the opposite, i.e., the segment $I_{1}w$ has a common point with $CH(A)$ other that $w$, call it $u$. Clearly, $\bigtriangleup wv_{up}v_{down}$ 
lies  inside $\bigtriangleup uv_{up}v_{down}$ and due to the convexity of the hull all points strictly inside $\bigtriangleup uv_{up}v_{down}$ are strictly 
inside  $CH(A)$ as well. But that is in contradiction with $w$ being a node of $CH(A)$, and so the second property holds.
\qed
\end{enumerate}
\end{proof}


We say that $(U,H,S)$ is a \emph{hull structure} if:
\begin{enumerate}
\item $U$ and $H$ are two sets of points in the plane with $H \subseteq \hull(U \cup H)$,
\item $S$ is a collection of segments connecting the points in $U\cup H$ such that no segment intersects other segment, except possibly at the ends,
\item any point from $U\cup H$ is an endpoint of at most two segments in $S$,
\item every segment in $S$ connecting two points from $H$ lies on $\hull(U \cup H)$.
\end{enumerate}
One can easily see that any optimal solution to an instance of $\GETSPH$ corresponds to a hull structure $(U,H,S)$, where $U$ consists of all inner
and terminal points, $H$ contains all hull points, and $S$ is a collection of segments constituting the paths. Furthermore, for any hull structure the following holds.


\begin{lemma}\label{small_triang}
If $(U,H,S)$ is a hull structure, and $I_{1}, I_{2}$ are the points enclosing $U\cup H$, then there exists a planar graph, such that:
\begin{enumerate}
\item the nodes are all points from $U\cup\{I_{1},I_{2}\}$ and possibly some points from $H$,
\item any segment from $S$ is either an edge of the graph, or lies within its face,
\item all of its faces are of size at most $4$,
\item the size of the graph is $\bigo(|U|)$.
\end{enumerate}
\end{lemma}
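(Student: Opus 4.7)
The idea is to include only a small, carefully chosen subset of $H$ as nodes, and to use the enclosing points $I_1$, $I_2$ both to bound the outer face and, via two \emph{fans} of edges, to absorb the long hull arcs into triangular faces.

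I first take $\tilde{H}\subseteq H$ to consist of $v_{up}$, $v_{down}$ together with every $h\in H$ that has an $S$-neighbor in $U$; since every $u\in U$ has degree at most two in $S$, this yields $|\tilde{H}|\leq 2|U|+2$, so $V_G:=U\cup\tilde{H}\cup\{I_1,I_2\}$ already satisfies $|V_G|=\bigo(|U|)$. Then I add edges in three layers. The \emph{outer frame} consists of $I_1v_{up},v_{up}I_2,I_2v_{down},v_{down}I_1$, pinning the outer face as a quadrilateral of size $4$. The \emph{hull skeleton and fans} put down a chord between every pair of hull-consecutive nodes of $V_G$ and an edge from $I_1$ (resp.\ $I_2$) to every node of $V_G$ on the $I_1$-side (resp.\ $I_2$-side) of the line through $v_{up}$ and $v_{down}$; by Lemma~\ref{enclosing} each such edge $I_jh$ meets $\hull(U\cup H)$ only at $h$, so the strip between the chord polygon and the frame is carved into triangles. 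Finally, the \emph{interior fill} adds every $S$-segment whose both endpoints lie in $V_G$ (safe, since no two $S$-segments cross) and then triangulates every remaining face to size $3$.

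Properties (1), (3) and (4) follow immediately from the construction: (1) by choice of $V_G$, (3) because the outer face is a quadrilateral and everything else is triangular, and (4) from $|V_G|=\bigo(|U|)$ together with planarity. For (2), take $s\in S$: if both endpoints lie in $V_G$ then $s$ was added as an edge in the last layer; otherwise some endpoint is in $H\setminus\tilde{H}$, and by the definition of $\tilde{H}$ both endpoints must then be in $H$, so by the hull-structure axiom $s$ lies on $\hull(U\cup H)$ and hence inside the arc between two hull-consecutive nodes of $V_G$, which in turn is contained in one of the fan triangles.

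The main technical hurdle I foresee is ensuring that the fan triangles $\triangle I_jh_ah_b$ really become faces of $G$, i.e.\ that no edge or vertex of $G$ sneaks inside them. Lemma~\ref{enclosing} supplies most of what is needed: every $I_j$-side hull point is strictly inside $\triangle I_jv_{up}v_{down}$, and every segment $I_jh$ touches the hull only at $h$, so fan edges are disjoint from both the chord polygon and each other except at shared vertices. The one subtlety I anticipate is that a $U$-point might accidentally lie in the lens between a chord and a hull arc; in that case I would include it inside a local triangulation of the affected fan region, adding $\bigo(|U|)$ extra edges in total without pushing the maximum face size above $3$.
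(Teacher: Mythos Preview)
Your overall plan is close in spirit to the paper's, but there is a genuine gap in the ``hull skeleton'' step. The chords you lay down between hull-consecutive nodes of $V_G$ can \emph{cross} segments of $S$. Concretely, take two hull-consecutive points $h_a,h_b\in\tilde H$ with many omitted $H$-points on the arc between them, and place some $u\in U$ in the lens between the chord $h_ah_b$ and that arc (nothing forbids this: $u$ only has to lie in $\hull(U\cup H)$). Any $S$-segment from $u$ to a point of $U$, or to a point of $\tilde H$, on the far side of the chord then crosses $h_ah_b$. Your final paragraph does spot the stray $u$, but the fix you sketch (``include it inside a local triangulation of the affected fan region'') only repairs the \emph{face}; it does not repair the \emph{edge}. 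Once $h_ah_b$ crosses an $S$-segment you cannot keep both in a straight-line planar graph, and without the chord the fan triangle $\triangle I_jh_ah_b$ disappears. So the construction, as written, need not yield a planar graph at all, and property~(2) is not established.

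The paper avoids this by replacing your chord polygon with $\hull(U')$, where $U'$ is $U$ together with the $H$-endpoints of $S$-segments touching $U$ (essentially your $\tilde H$ without $v_{up},v_{down}$). Because $U\subseteq U'$, every point of $U$ and every $S$-segment with an endpoint in $U$ lies inside or on $\hull(U')$, while the remaining $S$-segments have both endpoints in $H$ and therefore, by axiom~(4) of a hull structure, lie on $\hull(U\cup H)$. Hence the annulus between $\hull(U')$ and $\hull(U\cup H)$ contains no $U$-point and no $S$-segment in its interior; this is exactly the obstruction your chords run into. The paper then fills that annulus not by a direct fan from $I_j$ (Lemma~\ref{enclosing} says nothing about segments from $I_j$ to vertices of $\hull(U')$ that sit strictly inside $\hull(U\cup H)$) but by shooting rays from a fixed point $P$ interior to $\hull(U')$ through each vertex of $\hull(U')$, cutting the annulus into $\bigo(|U|)$ convex cells $T_i$, and covering each $T_i$ with a bounded number of triangles that use $I_1,I_2$ only where a cell spans a long arc of $\hull(U\cup H)$. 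That ray-partition is the missing ingredient; once you switch from the chord polygon to $\hull(U')$, something of this sort is needed to bridge to the outer hull while keeping face size bounded.
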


\begin{proof}
The enclosing points $I_{1}, I_{2}$ are defined by applying Lemma~\ref{enclosing} on $U\cup H$, same for $v_{up}$ and $v_{down}$. The subsets
of $U\cup H$ on the same side of the line going through $v_{up}v_{down}$ as $I_{1}$ and $I_{2}$ will be called $V_{1}$ and $V_{2}$, respectively.
The subset of $S$ containing all segments with at least one endpoint in $U$ will be called $S'$. Because any point of $U$ is an endpoint of at most
two segments, $|S'|=\bigo(|U|)$. We define $U'$ to be the whole $U$ together with the points of $H$ which are an endpoint of some segment in $S'$, and
create the first approximation of the desired planar graph using $U'$ as its set of nodes, and $S'$ as its set of edges. We triangulate this
planar graph, so that its inner faces are of size $3$. Notice that all of its edges are inside or on $\hull(U')$, and all remaining segments in
$S\setminus S'$ lie on $\hull(U\cup H)$, see Fig.~\ref{fig:Usteps}. So far, the size of the planar graph is $\bigo(|U|)$, its faces are small, and the
nodes are all points from $U$ and possible some points from $H$, and any segment from $S'$ is an edge there. Therefore, we just need to make
sure that any remaining segment is either an edge, or lies within a face.

\begin{figure}[t]
\centering
\def \svgwidth{0.94\columnwidth}
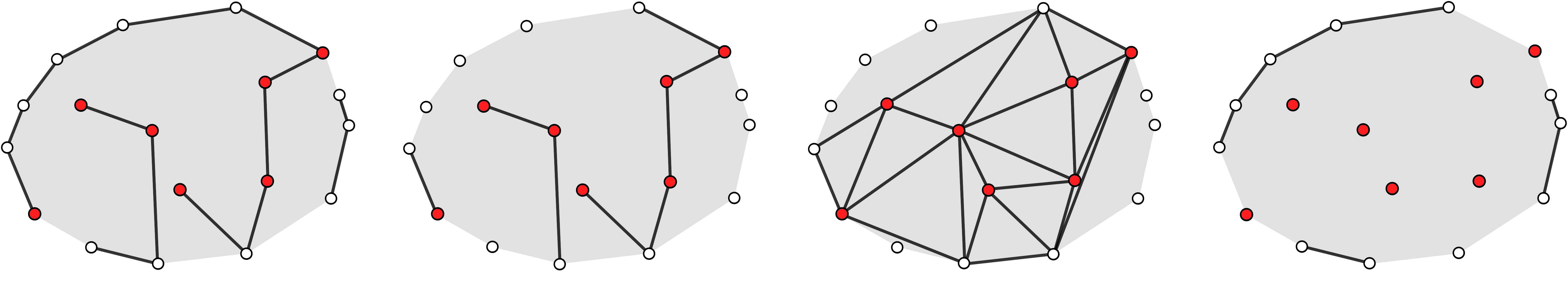
\caption{(a) The segments in $S$, (b) the segments in $S'$, (c) the initial triangulated planar graph, (d) the remaining segments. Points from $U$ are filled.}
\label{fig:Usteps}
\end{figure}

\begin{figure}[b]
\centering
  \centering
  \def \svgwidth{0.6\columnwidth}
  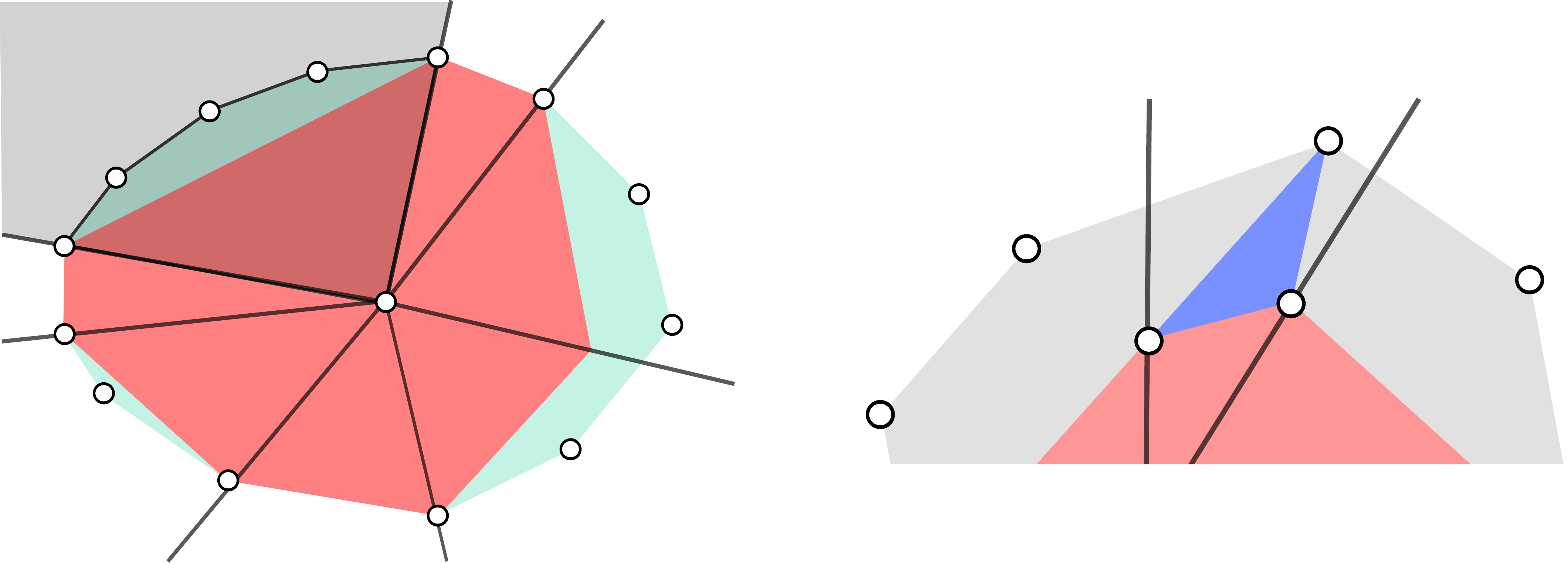
  \caption{(a) The intersection $T_{i}$, (b) adding one triangle when $a_i=b_i$.}
    \label{fig:problem13}
\end{figure}

To deal with the remaining segments, we add $I_{1}$ and $I_{2}$ to the set of nodes. Fix any point $P$ strictly inside $\hull(U')$ and, for every
node $P'$ of $\hull(U')$, draw a ray starting in $P$ and going through $P'$. All these rays partition the region outside $\hull(U')$ into convex subregions
$R_{1},R_{2},\ldots,R_{|\hull(U')|}$. The intersection of $R_{i}$ with $\hull(U\cup H)$, called $T_{i}$, contains exactly two vertices of
$\hull(U')$, call them $y_{i}$ and $y'_{i}$, see Fig~\ref{fig:problem13}(a). We will process every such $T_{i}$ separately, extending the current graph
by adding new triangles.
Consider the sequence of points $v_{a_{i}},v_{a_{i}+1},\ldots,v_{b_{i}}$ of $\hull(U\cup H)$, which belong to $T_{i}$. If the sequence is empty,
there is nothing to do. Otherwise we have two cases:
\begin{enumerate}
\item if $a_{i} =  b_{i}$, create a new triangle $\bigtriangleup v_{b_{i}}y_{i}y'_{i}$ to $\mathcal{T}$, see Fig.~\ref{fig:problem13}(b),
\item if $a_{i} \neq b_{i}$, create two new triangles
$\bigtriangleup v_{a_{i}}y_{i}y'_{i}$,  $\bigtriangleup v_{a_{i}}v_{b_{i}}y'_{i}$. Then add a triangle $\bigtriangleup I_{j}v_{a_{i}}v_{b_{i}}$ if both $v_{a_{i}}$  and $v_{b_{i}}$
belong to the same  $V_{j}$, see Fig.~\ref{fig:problem5}(a). Otherwise either $v_{up}$ or $v_{down}$ is in $v_{a_{i}}, \dots, v_{b_{i}}$, and we add three triangles as in Fig.~\ref{fig:problem5}(b).
\end{enumerate}

\begin{figure}[t]
\centering
\def \svgwidth{1\columnwidth}
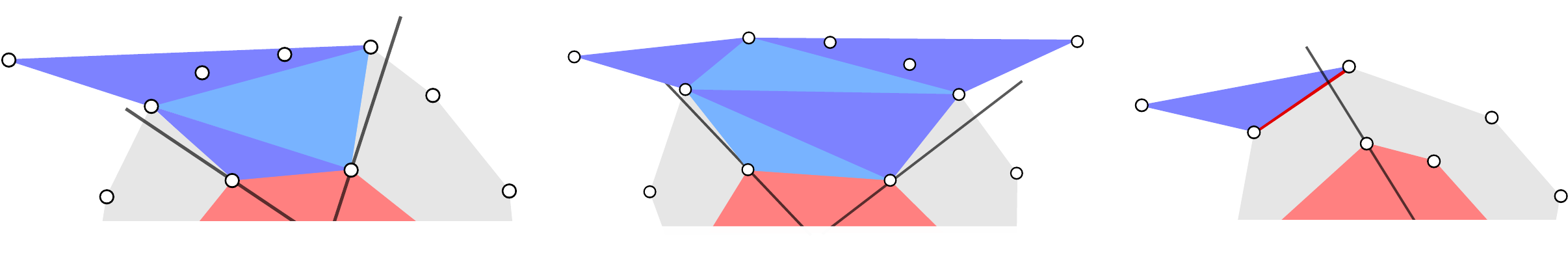
\caption{(a) $v_{a_{i}}, v_{b_{i}}$ in the same $V_j$, (b) $v_{up}$ between $v_{a_{i}}$ and $v_{b_{i}}$, (c) a remaining segment.}
\label{fig:problem5}
\end{figure}

Now any remaining segment which lies within a single $T_{i}$ is inside one of the new triangles. To deal with the other remaining segments, for
each such segment $v_{j}v_{j+1}$ we simply add either $\bigtriangleup I_{1}v_{j}v_{j+1}$ or $\bigtriangleup I_{2}v_{j}v_{j+1}$, see Fig.~\ref{fig:problem5}(c).
This is correct because any two consecutive points on $\hull(H)$ always either both belong to $V_{1}$ or both belong to $V_{2}$.
One ray can cross at most one edge, so the number of triangles created in this step is at most $|U'|$.

By the construction, the insides of any new triangles are disjoint. Also, they all lie outside $\hull(U')$. Hence we can add the new triangles to the
initial planar graph to form a larger planar graph. Because we created $\bigo(U')$ new triangles, the size of the new planar graph is still
$\bigo(U)$, though. Now some of its faces might be large, though, so we include $I_{1}, I_{2}, v_{up}, v_{down}$ in its set of nodes, and all $I_{1}v_{up}$, $I_{1}v_{down}, I_{2}v_{up}, I_{2}v_{down}$ in its set of edges. Finally, we triangulate the large inner faces, if any. The size of the final graph
is $\bigo(U)$ and we ensured that any segment from $S$ is either among its edges, or lies within one of its faces.
\qed
\end{proof}

Lemma~\ref{small_triang} shows that it is indeed enough to iterate over separators of size $c\sqrt{n+2m+2}$, hence Algorithm~\ref{algorithm}
is correct. The remaining part is to argue that it needs just linear space. By Lemma~\ref{GETSPH}, the reduction in line~\ref{line:reduce} uses
$\bigo(m+\ell)$ additional space which can be immediately reused.
Iterating through all ordered subsets of size at most $c\sqrt{n+2m+2}$ can be easily done with
$\bigo(\sqrt{n+2m})$ additional space. Bounding the space necessary to iterate over all possibilities of how the solution intersects with the
separator is less obvious, but the same bound can be derived by looking at how the possibilities were counted. The $\bigo(\sqrt{n+2m})$ additional
space must be stored for every recursive call. Additionally, for each call we must store its arguments $V, T$ and $H$, which takes
$\bigo(n+2m+2\ell)$ additional space. As $n+2m$ decreases by a constant factor in every recursive call, the recursion
depth is $\bigo(\log(n+2m))$, which in turn implies $\bigo(n+2m+2\ell\log(n+2m))$ overall space consumption. Even though we always reduce the 
instance so that $\ell \leq n^{2}$, this bound might be superlinear, and we need to add one more trick.

Recall that the hull pairs in the subproblems are disjoint subsets of all hull pairs in the original problem. Hence, instead of copying the hull pairs
to the subproblems, we can store them in one global array. All hull pairs in the current problem are stored in a contiguous fragment there. 
Before the recursive calls, we rearrange the fragment so that the hull pairs which should be processed in both subproblems are, again, stored
in contiguous fragments of the global array. The rearranging can be done in linear space and constant additional space. There is one problem, though.
When we return from the subproblems, the fragment containing the hull pairs might have been arbitrarily shuffled.
This is a problem, because we are iterating over the ordered subsets of all points, which requires operating on their
indices. Now the order of the hull pairs might change, so we cannot identify a hull point by storing the index of its pair. Nevertheless, we can
maintain an invariant that all hull pairs in the current problem are lexicographically sorted. In the very beginning, we just sort the global array.
Then, before we recurse on a subproblem, we make sure that its fragment is sorted. After we are done with both subproblems, we re-sort the fragment
of the global array corresponding to the current problem. This doesn't increase the total running time and decreases the overall space
complexity to linear.

Together with Theorem~\ref{GETSP}, this gives the final result.

\begin{theorem}
$\ETSP$ can be solved in $\bigo(nk^2+k^\bigo(\sqrt{k}))$ time and linear space.
\end{theorem}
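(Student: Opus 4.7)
The plan is to bolt together the two pieces the paper has already built: the quadratic bikernel from Section~\ref{sec:reduction} (in its $\GETSPH$ form, Lemma~\ref{GETSPH}) and the separator-based recursive solver of Section~\ref{sec:searching} (Algorithm~\ref{algorithm}), and then to account carefully for time and space.

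First I would reduce the input $\ETSP$ instance on $n$ points with $k$ inner points to an instance of $\GETSPH$ with $|V|=k$ inner points, $|T|=0$ terminal pairs, and $|H|=n-k$ hull pairs, as already described in the beginning of Section~\ref{sec:reduction} (the order on the convex hull is cyclic in any optimal tour, so adjacent hull vertices form the hull pairs). Next I would apply Lemma~\ref{GETSPH} once, at the outer level, to prune the $n-k$ hull pairs down to at most $k^{2}$: since $n=k$ and $m+\ell=n-k$, this costs $\bigo((m+\ell)k^{2}+k^{6})=\bigo(nk^{2}+k^{6})$ time and $\bigo(n)$ space, and yields an equivalent $\GETSPH$ instance with $|V|=k$, $|T|=0$, and $|H|\leq k^{2}$.

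Now I would hand this reduced instance to Algorithm~\ref{algorithm}. Because $\ell\leq k^{2}=\poly(k)$ holds at the top of the recursion, and because the algorithm restores the invariant $m+\ell\leq (|V|)^{2}$ via Lemma~\ref{GETSPH} at the start of every recursive call (line~\ref{line:reduce}), the assumption $\ell=\poly(n)$ required by the analysis in Section~\ref{sec:searching} is maintained throughout. Together with Lemma~\ref{small_triang}, which supplies a planar graph of size $\bigo(|V|+2|T|)$ whose faces are of size at most $4$ and into which the optimal solution fits, Theorem~\ref{Miller} yields simple cycle separators of length $c\sqrt{|V|+2|T|+2}$; iterating over all such separators and all ways they can meet the optimum gives the recurrence $T(|V|+2|T|)=\bigo((|V|+2|T|)^{\bigo(\sqrt{|V|+2|T|})})\cdot 2T(b)$ with $b\leq\frac{3}{4}(|V|+2|T|)$, which solves to $\bigo(k^{\bigo(\sqrt{k})})$ on our reduced instance (where $|V|=k$ at the root).

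The remaining issue, and the only slightly delicate one, is linear space. The recursion depth is $\bigo(\log k)$ and each frame needs $\bigo(\sqrt{|V|+2|T|})$ local scratch plus storage for its arguments $V,T,H$. The arguments $V$ and $T$ are fine because $|V|+2|T|$ shrinks by a constant factor per level. For $H$, even though we ensure $|H|\leq |V|^{2}$ at every level, naively copying $H$ across $\bigo(\log k)$ frames could push the space above linear; this is resolved by the global-array trick at the end of Section~\ref{sec:searching}, which stores all hull pairs of the current subproblem in a contiguous sorted block of one global array and re-sorts the block before and after each recursive descent. Putting everything together, the total running time is $\bigo(nk^{2}+k^{6})+\bigo(k^{\bigo(\sqrt{k})})=\bigo(nk^{2}+k^{\bigo(\sqrt{k})})$ in $\bigo(n)$ space, which is the claimed bound.
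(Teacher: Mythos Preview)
Your proposal is correct and follows essentially the same route as the paper: reduce $\ETSP$ to $\GETSPH$ with $|V|=k$, $|T|=0$, $|H|=n-k$, apply the matching-based reduction (Lemma~\ref{GETSPH}, the $\GETSPH$ version of Theorem~\ref{GETSP}) once to bring $|H|$ down to $k^{2}$ at cost $\bigo(nk^{2}+k^{6})$, and then run Algorithm~\ref{algorithm}, whose analysis via Lemma~\ref{small_triang} and Theorem~\ref{Miller} and whose global-array handling of hull pairs you summarize accurately. The only cosmetic wrinkle is the overloaded symbol $n$ in your sentence ``since $n=k$ and $m+\ell=n-k$'', where the first $n$ is the $\GETSPH$ parameter and the second is the $\ETSP$ input size; the computation itself is right.
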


\bibliographystyle{splncs03}
\bibliography{biblio}

\begin{thebibliography}{1}
\providecommand{\url}[1]{\texttt{#1}}
\providecommand{\urlprefix}{URL }

\bibitem{median}
Blum, M., Floyd, R., Pratt, V., Rivest, R., Tarjan, R.: Time bounds for
  selection. Journal of Computer and System Sciences  7,  448--461 (1972)

\bibitem{parametrized}
Deineko, V.G., Hoffmann, M., Okamoto, Y., Woeginger, G.J.: The traveling
  salesman problem with few inner points. Operations Research Letters  34(1),
  106--110 (2006)

\bibitem{FredmanFibonacci}
Fredman, M.L., Tarjan, R.E.: Fibonacci heaps and their uses in improved network
  optimization algorithms. J. ACM  34(3),  596--615 (Jul 1987)

\bibitem{separators}
Hwang, R., Chang, R., Lee, R.: The searching over separators strategy to solve
  some {NP}-hard problems in subexponential time. Algorithmica  9(4),  398--423
  (1993)

\bibitem{kann1992approximability}
Kann, V.: On the Approximability of NP-complete Optimization Problems.
  Trita-NA, Royal Institute of Technology, Department of Numerical Analysis and
  Computing Science (1992)

\bibitem{fasterparametrized}
Knauer, C., Spillner, A.: A fixed-parameter algorithm for the minimum weight
  triangulation problem based on small graph separators. In: Proceedings of the
  32nd international conference on Graph-Theoretic Concepts in Computer
  Science. pp. 49--57. WG'06, Springer-Verlag, Berlin, Heidelberg (2006)

\bibitem{miller}
Miller, G.L.: Finding small simple cycle separators for 2-connected planar
  graphs. In: Proceedings of the sixteenth annual ACM symposium on Theory of
  computing. pp. 376--382. STOC '84, ACM, New York, NY, USA (1984)

\bibitem{npcomplete}
Papadimitriou, C.H.: The {E}uclidean travelling salesman problem is
  {NP}-complete. Theoretical Computer Science  4(3),  237 -- 244 (1977)

\bibitem{Smith}
Smith, W.D.: Studies in Computational Geometry Motivated by Mesh Generation.
  Ph.D. thesis, Princeton University, Princeton, NJ, USA (1989)

\end{thebibliography}

\end{document}